\documentclass[twoside,leqno,11pt]{article}
\usepackage{graphicx,fullpage} 
\usepackage{amsfonts,amsmath,amsthm,amssymb,mathtools, comment}\usepackage{xcolor}
\usepackage[T1]{fontenc}
\usepackage[colorlinks,citecolor=blue,linkcolor=blue,urlcolor=red,pagebackref]{hyperref}
\usepackage[capitalise]{cleveref}
\usepackage{url}
\usepackage{thmtools, thm-restate}
\usepackage{algorithm, algorithmicx, algpseudocode}
\usepackage{tikz}
\usepackage{bm}
\usepackage{thm-restate}
\usepackage{mdframed}

\newcommand{\eps}{\varepsilon}
\renewcommand{\Pr}{\operatorname*{\mathbf{Pr}}}

\newcommand{\poly}{\operatorname{\mathrm{poly}}}
\newcommand{\polylog}{\poly\log}

\newcommand{\E}{\mathbb{E}}

\newcommand\threesum{$\mathsf{3SUM}$}
\newcommand\threexor{$\mathsf{3XOR}$}

\newtheorem{theorem}{Theorem}[section]  
\newtheorem{lemma}[theorem]{Lemma}

\newtheorem{claim}[theorem]{Claim}
\newtheorem{cor}[theorem]{Corollary}
\theoremstyle{definition}
\newtheorem{definition}[theorem]{Definition}

\newif\ifcomments
 \commentsfalse  

\title{Preprocessed 3SUM for Unknown Universes\\ with Subquadratic Space}
\author{Yael Kirkpatrick\\MIT \and John Kuszmaul\\MIT \and Surya Mathialagan\\NTT Research \and Virginia Vassilevska Williams\\MIT}
\date{}

\begin{document}
    \maketitle

\begin{abstract}
    We consider the classic \threesum{} problem: given sets of integers $A, B, C $, determine whether there is a tuple $(a, b, c) \in A \times B \times C$ satisfying $a + b + c = 0$. The 3SUM Hypothesis, central in fine-grained complexity, states that there does not exist a truly subquadratic time 3SUM algorithm. Given this long-standing barrier, recent work over the past decade has explored \threesum{} from a data structural perspective. Specifically, in the \threesum{} in preprocessed universes regime, we are tasked with preprocessing sets $A, B$ of size $n$, to create a space-efficient data structure that can quickly answer queries, each of which is a \threesum{} problem of the form $A', B', C'$, where $A' \subseteq A$ and $B' \subseteq B$. A series of results have achieved $\tilde{O}(n^2)$ preprocessing time, $\tilde{O}(n^2)$ space, and query time improving progressively from $\tilde{O}(n^{1.9})$ \cite{ChanL15} to $\tilde{O}(n^{11/6})$ \cite{ChanWX23} to $\tilde{O}(n^{1.5})$ \cite{preprocessedSOSA2025}.
    Given these series of works improving query time, a natural open question has emerged: can one achieve both
    truly subquadratic space and truly subquadratic query time for \threesum{} in preprocessed universes?

    We resolve this question affirmatively, presenting a tradeoff curve between query and space complexity. Specifically, we present a simple randomized algorithm achieving $\tilde{O}(n^{1.5 + \varepsilon})$ query time and $\tilde{O}(n^{2 - 2\varepsilon/3})$ space complexity. Furthermore, our algorithm has $\tilde{O}(n^2)$ preprocessing time, matching past work. Notably, quadratic preprocessing is likely necessary for our tradeoff as either the preprocessing or the query time must be at least $n^{2-o(1)}$ under the \threesum{} Hypothesis.
\end{abstract}
    \section{Introduction}
One of the central problems in fine-grained complexity is \threesum{}: Given three sets of $n$ integers, $A,B,C$, are there $a\in A, b\in B, c\in C$ so that $a+b+c=0$? A classic algorithm solves \threesum{} in $O(n^2)$ time. The fastest known algorithm by Baran, Demaine, Patrascu \cite{bdp} runs in $n^2\poly(\log\log n) / \log^2 n$ time in the word-RAM model. Chan \cite{chanreal} obtained the same running time in the real-RAM model. 
The \threesum{} Hypothesis from fine-grained complexity postulates that there is no $O(n^{2-\eps})$ time algorithm for \threesum{} for any $\eps>0$. 
Given this algorithmic barrier, much research has gone into understanding settings in which \threesum{} admits faster algorithms (see e.g., \cite{ChanL15,JinX23,GoldsteinKLP17,kp19,dataStructuresFor3sum2020, ChanWX23, preprocessedSOSA2025}).

In this paper we consider the \threesum{} problem in {\em preprocessed universes}, a variant of which was first discussed by Bansal and Williams \cite{bw12} and attributed to Avrim Blum.

The problem has two variants:
\begin{itemize}
\item {\bf \threesum{} in {\em preprocessed universes} with known} $C$:
Given three sets of $n$ integers $A,B,C$, preprocess them, so that on any query $A'\subseteq A, B'\subseteq B, C'\subseteq C$, one can solve \threesum{} on $A',B',C'$ fast. 

\item {\bf \threesum{} in {\em preprocessed universes} with unknown} $C$:
Given two sets of $n$ integers $A,B$, preprocess them, so that on any query $A'\subseteq A, B'\subseteq B$ and set of $n$ integers $C'$, one can solve \threesum{} on $A',B',C'$ fast.
\end{itemize}
For both variants there are three natural parameters to optimize: (1) the preprocessing time, (2) the space used after preprocessing, (3) the query time.

Note that under the \threesum{} Hypothesis, either the preprocessing time or the query time must be $n^{2-o(1)}$. Hence all prior work has focused on optimizing the query time and space usage, given that the preprocessing time is essentially quadratic. A summary of the prior work is given in Table~\ref{table:results}.

\begin{table}[h!]
\centering
\begin{tabular}{||c | c | c | c||} 
 \hline
 Reference & Query & Space & Unknown $C$ \\ [0.5ex] 
 \hline\hline
 \cite{ChanL15} & $n^{13/7}$ & $n^{13/7}$  &  NO \\ 
  \hline
\cite{ChanWX23} & $n^{1.891}$ & $n^{1.891}$  &  NO \\ 
 \hline
 \cite{preprocessedSOSA2025} & $n^{1.5}$ & $n^{1.5}$ & NO \\
 \hline
 \cite{ChanL15} & $n^{1.9}$ & $n^2$ & YES  \\

 \hline
 \cite{ChanWX23} & $n^{11/6}$ & $n^2$ & YES  \\

 \hline
 \cite{preprocessedSOSA2025} & $n^{1.5}$ & $n^{2}$ & YES \\
 \hline
 {\bf This work} & & & \\
  $\forall \eps\in [0,{1/2}]$  & $n^{1.5+\eps}$ & $n^{2-2\eps/3}$ & YES  \\
 \hline
\end{tabular}
\caption{All running times above include polylogarithmic factors, so they should be thought of as enclosed in $\widetilde{O}()$. }
\label{table:results}
\end{table}


For the {\em known} $C$ case, Chan and Lewenstein \cite{ChanL15} showed that after $\widetilde{O}(n^2)$ preprocessing time\footnote{$\widetilde{O}$ suppresses $\polylog(n)$ factors.}, the \threesum{} queries can be performed in truly subquadratic time of $\widetilde{O}(n^{13/7})$, while the space usage is also subquadratic, namely $\widetilde{O}(n^{13/7})$. Fischer \cite{Fischer25} extended this to the real RAM model. Chan, Vassilevska Williams and Xu \cite{ChanWX23} 
improved the query time and space usage to $O(n^{1.891})$. With quadratic space usage they also obtained improved query time of $\widetilde{O}(n^{11/6})$, and this algorithm also worked for the {\em unknown} $C$ case for which \cite{ChanL15} had achieved $\widetilde{O}(n^{1.9})$ query time and quadratic space.

More recently, with a much simpler algorithm, Kasliwal, Polak and Sharma \cite{preprocessedSOSA2025} drastically improved the query running time and space usage to $O(n^{1.5}\log n)$ for the {\em known} $C$ case, while keeping the preprocessing time quadratic. For the {\em unknown} $C$ case, their approach achieves the same query time but uses quadratic space.

Notably, all known quadratic preprocessing and subquadratic query time algorithms for \threesum{} in preprocessed universes in the case of {\em unknown} $C$ use {\bf quadratic space}. This naturally raises the following question.

\begin{center}
{\em Can one achieve essentially quadratic preprocessing time and {\bf truly subquadratic space and query time} for \threesum{} in preprocessed universes with unknown $C$?}
\end{center}

More generally, we are interested in the trade-off between space usage and query time for the problem.
Our main result is as follows:
\begin{restatable}{theorem}{SubquadraticUnknownC}
\label{thm:subquadraticUnknownC}
   For any $\eps \in [0,1/2]$, there exists a randomized algorithm that preprocesses $2$ sets of $n$ integers $A,B$ in $\widetilde{O}(n^2)$ time using $\widetilde{O}(n^{2-2\eps/3})$ space, such that upon a query $A'\subseteq A, B'\subseteq B$ and a set of $O (n)$ integers $C'$, it solves \threesum{} on $A',B',C'$ in $\widetilde{O}(n^{1.5+\eps})$ time with high probability.
\end{restatable}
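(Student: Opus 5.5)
The plan is to start from the hashing framework of \cite{preprocessedSOSA2025} for unknown $C$ and cut its quadratic space by storing only part of the pairwise sumset information. The pieces that are not stored will be recomputed at query time, within the budget $\widetilde{O}(n^{1.5+\eps})$. The weakest step, which I have not verified, is the popularity count in the fourth paragraph.

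\textbf{Setup.} Fix a random almost-linear hash $h:\mathbb{Z}\to[m]$, for instance $h(x)=\lfloor (rx \bmod p)/(p/m)\rfloor$ with $p$ a random prime. Then $h(a)+h(b)\in h(a+b)+\{0,\pm1\}\pmod m$. Partition $A=\bigcup_i A_i$ and $B=\bigcup_j B_j$ by hash value. With high probability every bucket has size $\widetilde{O}(n/m)$, after the standard peeling of overfull buckets into $\widetilde{O}(n)$ exceptional elements handled separately. A solution $(a,b,c)$ then lies in one of $O(m)$ bucket pairs $(i,j)$ determined by $h(-c)$ and $i$. The query therefore reduces to: for each $c\in C'$ and each of the $O(m)$ relevant pairs $(i,j)$, decide whether $-c\in A'_i+B'_j$.

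\textbf{Two ways to answer a membership test.} In \cite{preprocessedSOSA2025}, each sumset $A_i+B_j$ is stored in a dictionary together with its witness pairs. This costs $\sum_{i,j}|A_i||B_j|=O(n^2)$ space and $O(nm)$ dictionary lookups per query, with $m\approx\sqrt n$. To save space I would store these dictionaries only for a subset $S$ of bucket pairs. For pairs outside $S$, the structure is rebuilt on the fly: when some $c\in C'$ needs a pair $(i,j)\notin S$, compute $A'_i+B'_j$ directly in $\widetilde{O}((n/m)^2)$ time. I would balance the two costs by changing $m$ to about $n^{1/2+\eps}$. This gives more, smaller buckets: lookups cost $nm=n^{1.5+\eps}$. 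Each recomputation costs only $(n/m)^2=n^{1-2\eps}$, so the query may afford roughly $n^{0.5+3\eps}$ recomputations within budget. The space of a stored pair is $(n/m)^2=n^{1-2\eps}$.

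\textbf{Choosing $S$.} I would choose $S$ by randomness, not adversarially. Sample a random set $S$ of bucket pairs of density $n^{-2\eps/3}$, which gives space $n^{2-2\eps/3}$. For pairs outside $S$, add a second, independent hash layer: re-randomize and repeat over $O(\log n)$ independent hash functions. A fixed solution $(a,b,c)$ then lands in a stored pair in at least one repetition with good probability. The remaining task is to show that the query work spent on the $A'_i+B'_j$ tests outside $S$ stays below $n^{1.5+\eps}$. The exponent $2\eps/3$ should come out of optimizing the density of $S$ against $m$.

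\textbf{Main obstacle.} Witness lists are the hard part. A stored dictionary entry for $-c$ may list up to $|A_i|$ pairs $(a,b)$ from the full sets, and none of them need survive the restriction to $A'\times B'$. Scanning these lists could cost $\Theta(n^2)$ in total. I would handle this with a popularity threshold $t$, treating sums with more than $t$ representations in $A\times B$ as popular. Popular sums are at most $n^2/t$ in number and would be handled by recomputation over the small buckets. Unpopular entries are scanned at cost $t$ each. Making this accounting close at $t\approx n^{\eps}$, uniformly over adversarial subsets $A',B'$, is the step I would work out first and expect to be the crux.
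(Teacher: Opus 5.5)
There is a genuine gap, and it sits exactly in the step that is supposed to save space. Storing dictionaries for a set $S$ of bucket pairs and recomputing the other pairs at query time achieves no compression, whether $S$ is random or not.

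\textbf{Recomputation is quadratic.} With $m=n^{1/2+\eps}$ and $|C'|=\Theta(n)\gg m$, the values $h(-c)$ cover essentially all of $[m]$. Each value's diagonal contains $\Theta(m)$ bucket pairs, so essentially all $m^2=n^{1+2\eps}$ pairs are relevant to a single query. If $S$ has density $n^{-2\eps/3}$, recomputing the pairs outside $S$ costs about $(1-n^{-2\eps/3})|A'||B'|$. That is $\Theta(n^2)$ for $A'=A$, $B'=B$. Testing each $-c$ separately by scanning $B'_j$ also costs $n\cdot m\cdot (n/m)=n^2$.

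Your own numbers show this. The budget affords $n^{1/2+3\eps}$ recomputations, which is only an $n^{-(1/2-\eps)}$ fraction of the pairs. So $S$ would have to contain all but a vanishing fraction of them, i.e.\ $\Theta(n^2)$ space.

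\textbf{Repetition does not help.} If $S$ is independent of the solution, a fixed $(a,b,c)$ lands in a stored pair with probability $n^{-2\eps/3}$ per repetition. So $O(\log n)$ repetitions succeed with probability only $O(n^{-2\eps/3}\log n)$. Reaching constant probability takes $n^{2\eps/3}$ repetitions, each using space $n^{2-2\eps/3}$, which is $n^2$ again. Your density $n^{-2\eps/3}$ was read off the target bound rather than derived.

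\textbf{Popularity threshold.} Recomputing popular sums over the small buckets is quadratic for the same reason; take $A=B=[n]$ and $\Theta(n)$ popular targets. Conversely, the witness lists of unpopular sums, which you flag as the crux, are harmless. For a fixed $c$, the lists along its diagonal partition the representations of $-c$, so scanning them costs at most $t$ per $c$. The real obstacles are the popular sums and the space bound.

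\textbf{What is missing.} You need a primitive that answers questions about all $n^2$ pairs while storing polynomially fewer than $n^2$ words. Your popularity split is the right decomposition; the paper uses threshold $n^\delta$ with $\delta=\frac{1}{2}-\frac{\eps}{3}$. But each side needs such a primitive.

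For heavy sums, of which there are at most $n^{2-\delta}$, the paper needs no witnesses at all. At query time, FFT computes the counts of $A'+B' \bmod p$ for a random prime $p\approx n^{3/2+\eps}$ in $\widetilde{O}(p)$ time. One then subtracts the stored false positives lying in $A'\times B'$. These are stored only for heavy targets, so there are $\widetilde{O}(n^{4-\delta}/p)=\widetilde{O}(n^{2-2\eps/3})$ of them.

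For light sums, the paper proceeds as follows:
\begin{itemize}
\item $A$ is randomly split into $\widetilde{O}(n^\delta)$ parts, so that each representation of a light $c$ is, with good probability, the unique preimage of $c$ in its part.
\item $f(a,b)=a+b$ is made injective on pairs with heavy sums, so every $f_i$ has in-degree $O(\log n)$.
\item On each $A_i\times B$ a Fiat--Naor inverter with $TS^2=\widetilde{O}(N^2)$, $N=n^{2-\delta}$, is built with $S=n^{3/2-\eps/3}$ and $T=n^{4\eps/3}$.
\end{itemize}
Inverting each light $c\in C'$ in every part and checking the returned pair against $A'\times B'$ is what handles the restriction to subsets.

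The exponent $2\eps/3$ is a fingerprint of the $TS^2$ curve; an inverter with $ST=N$ would give space $n^{2-\eps}$. So a bucketing scheme with no inversion step should not be expected to reach it.
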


We note that the query time of our algorithm is only guaranteed against an oblivious adversary. While not stated explicitly, this is the case for past works on preprocessed \threesum{} as well.

\subsection{Technical Overview}
In this paper we combine the techniques of two previously best known results in the field of preprocessed \threesum{} to achieve the first algorithm for preprocessed \threesum{} with unknown $C$ using subquadratic space. 

\paragraph{Prior work on Preprocessed \threesum{}.} First we consider the recent result of Kasliwal, Polak and Sharma \cite{preprocessedSOSA2025}, where in the case where $C$ is known at preprocessing, the authors obtain an algorithm with query time $\widetilde{O}(n^{1.5})$ and $\widetilde{O}(n^{1.5})$ space. This algorithm is achieved by sampling a random prime $p$ and computing the set of triples $a,b,c\in A\times B\times C$ that are not a \threesum{} solution, $a+b\neq c$, but mod p would look like one, i.e., $a+b \equiv_p c$. Such a triple is considered a \emph{false positive} and the number of false positives is bounded by $O(\frac{n^3}{p})$. 

Given the subsets $A',B',C'$ at query time, the algorithm computes $A'+B'$ mod $p$ in $O(p\log p)$ time using the fast Fourier transform (FFT) - for every $c\in C$ it computes the number of pairs $(a,b)\in A'\times B'$ such that $a+b\equiv_p c$. It then goes over all stored false positives and for each false positive in $A'\times B'\times C'$, subtracts one from the computed number of pairs in $A'\times B'$ that sum to the specified value of $c$ mod $p$. This leaves us with the number of pairs that truly sum up to $c$. The algorithm runs in time $\widetilde{O}(p + \frac{n^3}{p})$, and so taking $p=O(n^{1.5})$ gives the desired query time and space.

In the case where $C$ is unknown at preprocessing, the authors use a similar approach of FFT and ruling out false positives. However, in this case, the algorithm stores a data structure that essentially saves the entire sumset $A+B$, which we will have to avoid in order to achieve subquadratic space.

\paragraph{Prior work on \threesum{}-indexing.} The second result we consider is the work of Golovnev, Guo, Horel, Park and Vaikuntanathan \cite{dataStructuresFor3sum2020}, where the authors construct an algorithm for the closely related problem of \threesum{}-indexing. In this variant of the problem, we are given the sets $A$ and $B$ to preprocess, and at query time we receive a value $c$ and are asked if there exists a pair $a\in A,b\in B$ that sums to it. The question in this case regards the optimal tradeoff between query time and space. The key difference between \threesum{} indexing and preprocessed \threesum{} with unknown $C$ is that for the former, the query is about any pair of numbers in $A\times B$, whereas in the latter, the query asks to find a \threesum{} solution within a specific subset of $A$ and $B$. Thus, \threesum{}-indexing cannot directly solve preprocessed \threesum{} with unknown $C$ and some additional tools are needed in order to apply \threesum{} indexing techniques to the problem of preprocessed \threesum{}. 

To solve the \threesum{} indexing problem, Golovnev et al. \cite{dataStructuresFor3sum2020} use a hashing argument combined with the cryptographic tool of Fiat-Naor function inversion \cite{fiatNaor}. This well known result gives a query-time/space tradeoff for the data structures problem of inverting a function. Golovnev et al. consider the function $f:A\times B \to A+B$ defined by $f(a,b) = a+b$. Inverting this function is equivalent to finding a pair in $A\times B$ that sums to a given value. However, Fiat-Naor function inversion on its own is insufficient for our problem, as it does not allow for restriction of the sources when multiple candidates exist, which we would need in the case when multiple pairs sum up to a given value but some of them are not in $A'\times B'$. Thus, this technique cannot be directly applied to the more general problem of preprocessed \threesum{}.

\paragraph{Our techniques.} To combine these techniques, we treat the values in the set $A+B$ differently based on their multiplicity, i.e., the number of pairs in $A,B$ that sum to each value. We call a value $c$ a \emph{heavy hitter} if it has multiplicity at least $n^\delta$ and denote by $\widetilde{C}$ the set of all heavy hitters. Note that since every value in $\widetilde{C}$ has more than $n^\delta$ different pairs that sum to it, we can bound the size of the set by $|\widetilde{C}|= O(n^{2-\delta})$. To handle the heavy hitters, we apply the techniques of \cite{preprocessedSOSA2025} and precompute the false positives mod $p$ in $A\times B\times \widetilde{C}$. Using a standard argument we can show that the number of such false positives can be bounded by $O(\frac{n^{4-\delta}}{p})$.

Next, we handle the non-heavy hitters. We divide our input set $A$ into $\ell =\widetilde{O}(n^\delta)$ parts $A_1, \ldots, A_\ell$. By doing this multiple times, we can guarantee that with high probability, for every value $c\in A+B$ with multiplicity $<n^\delta$, in at least one of the instances there are no two pairs $(a_1,b_1),(a_2,b_2)\in A\times B$ that sum to $c$ and fall into the same part in the partition, i.e., no $i$ satisfies $a_1, a_2\in A_i$. We now preprocess a Fiat-Naor inversion on each $A_i\times B$ independently, adjusting the function $f(a,b)=a+b$ slightly so that we can use a version of Fiat-Naor with a better time/space tradeoff than the general case used in \cite{dataStructuresFor3sum2020}.

Given the sets $A',B',C$ at query time, we compute $A'+B'$ mod $p$ using FFT. Now, for any heavy hitter $c\in C\cap \widetilde{C}$, we rule out the false positives in $A'\times B'$ that sum to $c$ mod $p$ and are left with the number of pairs that sum to $c$ exactly. For every non-heavy hitter $c\in C\setminus \widetilde{C}$, we query each Fiat-Naor instance to find a pair $(a,b)\in A_i\times B$ that sums to $c$ if such a pair exists. We then check if $(a,b)\in A'\times B'$ and if so return the triple $a,b,c$. Since each pair that sums to $c$ falls in a distinct Fiat-Naor instance, this process will scan all such pairs and find one in $A'\times B'$ if one exists.

The space requirements of our algorithm are dominated by the $\widetilde{O}(n^\delta)$ instances of Fiat-Naor function inversion data structures and the set of $O(\frac{n^{4-\delta}}{p})$ false positives for heavy hitters. By balancing the values of $\delta,p$ we obtain a tradeoff between query time and space, achieving subquadratic space for any query time greater than $n^{1.5+\Omega(1)}$.
    \section{Preliminaries}
Let $[n]\coloneqq\{1,2,\dots,n\}$. Denote by $\log$ the base-2 logarithm and by $\ln$ the natural log.

Given two sets of numbers $A,B$ we define their sumset as $A+B = \{a+b : a\in A, b\in B\}$. We define the {\em multiplicity} of an element in $A+B$ as the number of pairs in $A\times B$ that sum up to it $\text{mult}(c)\coloneqq|\{(a,b)\in A\times B : a+b = c\}|$. We denote by $\equiv_p$ equivalence of integers modulo $p$. We assume the elements of $A, B, C$ (and the sumset $A + B$) are bounded by $n^{\alpha} = \poly(n)$ for some $\alpha = O(1)$, and thus the elements fit into a constant number of machine words. By shifting the values by a constant we can also assume they are all positive.


\subsection{Inequalities} 
We first recall the following inequalities.

\begin{lemma}[Markov's Inequality]
    Let $X$ be a non-negative random variable and $a\geq 0$, then
    \[
    \Pr[X\geq a\cdot \mathbb{E}[X]] \leq \frac{1}{a}.
    \]
\end{lemma}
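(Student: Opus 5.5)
Markov's inequality follows directly from comparing $X$ with a scaled indicator, and I will argue it that way rather than invoking any heavier machinery. Set $\mu = \mathbb{E}[X]$ and note that $\mu \ge 0$ because $X$ is non-negative.

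\emph{Degenerate cases.} First I will dispose of the cases where the bound is trivial or the event is special.
\begin{itemize}
\item If $a \le 1$, the right-hand side $1/a$ is at least $1$ (interpreted as $+\infty$ when $a=0$), so the inequality holds automatically.
\item If $\mu = 0$ and $a>0$, the event is $\{X \ge 0\}$, which has probability $1$, while $1/a$ may be smaller. So the statement as literally written needs either $a\ge 1$ or $\mu>0$. I will assume $\mu > 0$ in the main argument, and record that for $\mu = 0$ the claim is only asserted for $a \le 1$.
\end{itemize}

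\emph{Main case ($\mu>0$, $a>0$).} Let $t = a\mu > 0$. The key step is the pointwise inequality
\[
X \;\ge\; t\cdot \mathbf{1}[X \ge t],
\]
which holds for every outcome. When $X\ge t$ the right side equals $t \le X$. Otherwise the right side is $0 \le X$, using non-negativity of $X$. Taking expectations and using monotonicity and linearity of expectation gives
\[
\mu \;\ge\; t\Pr[X\ge t] \;=\; a\mu\Pr[X\ge a\mu].
\]
Dividing both sides by $a\mu>0$ yields $\Pr[X \ge a\,\mathbb{E}[X]] \le 1/a$, as claimed.

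There is no real obstacle here. The only point needing care is the degenerate case above, where $\mathbb{E}[X]=0$ forces the event to have probability $1$. In the paper the lemma will presumably be applied with $\mathbb{E}[X]>0$ (for instance, to counts of false positives), or with $a\ge 1$ where the bound is trivially satisfied.
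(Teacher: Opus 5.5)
The paper only recalls Markov's inequality as a standard fact without proof. Your argument via the pointwise bound $X \ge t\,\mathbf{1}[X \ge t]$ is the standard one and is correct. The division by $a\mu$ tacitly needs $\mathbb{E}[X]<\infty$, but when $\mathbb{E}[X]=\infty$ and $a>0$ the event $\{X\ge a\,\mathbb{E}[X]\}$ is empty for real-valued $X$, so nothing is lost.

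Your degenerate-case remark is also right: for $X\equiv 0$ and $a>1$ the statement as literally written fails. It therefore needs $\mathbb{E}[X]>0$, or the form $\Pr[X\ge t]\le \mathbb{E}[X]/t$ for $t>0$. This is harmless here, since the paper only applies it with a positive upper bound on the expectation.
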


\begin{theorem}[Chernoff bound]
    Let $X = \sum_{i=1}^n X_i$, where each $X_i$ is an independent Bernoulli variable where $X_i = 1$ with probability $p_i$. Let $\mu = \mathbb{E}[X]$. Then, for all $\delta > 0$,
    \[
        \Pr[X \geq (1 + \delta)\mu] \leq e^{-\frac{\delta^2\mu}{\delta + 2}}.
    \]
\end{theorem}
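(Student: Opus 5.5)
The statement immediately above is the Chernoff bound in the form $\Pr[X\ge(1+\delta)\mu]\le e^{-\delta^2\mu/(\delta+2)}$. I would prove it with the standard exponential-moment (Chernoff) method, using Markov's inequality as stated just before.

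\textbf{Step 1: exponentiate and apply Markov.} Fix $t>0$. Because $x\mapsto e^{tx}$ is increasing, Markov's inequality gives
\[
\Pr[X\ge(1+\delta)\mu]=\Pr[e^{tX}\ge e^{t(1+\delta)\mu}]\le \frac{\mathbb{E}[e^{tX}]}{e^{t(1+\delta)\mu}}.
\]

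\textbf{Step 2: bound the moment generating function.} By independence, $\mathbb{E}[e^{tX}]=\prod_i \mathbb{E}[e^{tX_i}]$. Each factor equals $1+p_i(e^t-1)$, and $1+x\le e^x$ bounds it by $e^{p_i(e^t-1)}$. Since $\sum_i p_i=\mu$, we get $\mathbb{E}[e^{tX}]\le e^{\mu(e^t-1)}$.

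\textbf{Step 3: optimize over $t$.} Choose $t=\ln(1+\delta)>0$. This yields the classical bound
\[
\Pr[X\ge(1+\delta)\mu]\le\left(\frac{e^{\delta}}{(1+\delta)^{1+\delta}}\right)^{\mu}=\exp\bigl(-\mu[(1+\delta)\ln(1+\delta)-\delta]\bigr).
\]

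\textbf{Step 4: the analytic inequality.} It remains to show that for all $\delta>0$,
\[
(1+\delta)\ln(1+\delta)-\delta\ge\frac{\delta^2}{2+\delta}.
\]
I expect this to be the only nontrivial step. I would first use the known inequality $\ln(1+\delta)\ge\frac{2\delta}{2+\delta}$ for $\delta\ge 0$. To prove it, let $g(\delta)=\ln(1+\delta)-\frac{2\delta}{2+\delta}$. Then $g(0)=0$ and
\[
g'(\delta)=\frac{1}{1+\delta}-\frac{4}{(2+\delta)^2}=\frac{\delta^2}{(1+\delta)(2+\delta)^2}\ge 0,
\]
so $g\ge 0$ on $\delta\ge 0$. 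Substituting this bound gives
\[
(1+\delta)\ln(1+\delta)-\delta\ge\frac{2\delta(1+\delta)}{2+\delta}-\delta=\frac{2\delta+2\delta^2-2\delta-\delta^2}{2+\delta}=\frac{\delta^2}{2+\delta}.
\]

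Combining Steps 3 and 4 gives $\Pr[X\ge(1+\delta)\mu]\le e^{-\delta^2\mu/(\delta+2)}$, which is the claimed bound.
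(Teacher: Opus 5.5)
Your proof is correct and complete. The exponential-moment bound with $t=\ln(1+\delta)$, combined with the inequality $\ln(1+\delta)\ge 2\delta/(2+\delta)$ (which you verify correctly via $g'(\delta)=\delta^2/((1+\delta)(2+\delta)^2)\ge 0$), gives exactly the stated form. The paper itself does not prove this bound; it simply recalls it as a standard preliminary, so there is no route to compare, and yours is the usual textbook derivation.
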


\begin{lemma}[Bernoulli's Inequality]
    For $y \in [0, 1)$ and $x \geq 1$, we have that
    \[
        (1 - y)^x \geq 1 - xy.
    \]
\end{lemma}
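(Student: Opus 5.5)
The statement just above is Bernoulli's Inequality for real exponents $x \geq 1$, so I would prove it by a one-variable calculus argument in $y$ with $x$ held fixed, rather than by induction (which only covers integer $x$). Fix $x \geq 1$ and define
\[
g(y) \coloneqq (1-y)^x - 1 + xy \qquad \text{for } y \in [0,1).
\]
The goal is then exactly the claim $g(y) \geq 0$ on $[0,1)$.

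First, I check the base point: $g(0) = 1 - 1 + 0 = 0$. Second, I show $g$ is nondecreasing on $[0,1)$. On this interval $1-y > 0$, so $g$ is differentiable with
\[
g'(y) = -x(1-y)^{x-1} + x = x\bigl(1 - (1-y)^{x-1}\bigr).
\]
Since $0 < 1-y \leq 1$ and $x - 1 \geq 0$, we get $(1-y)^{x-1} \leq 1$. When $x = 1$ this quantity is identically $1$, which is harmless. Together with $x > 0$ this gives $g'(y) \geq 0$. By the mean value theorem, $g(y) \geq g(0) = 0$ for every $y \in [0,1)$, which rearranges to $(1-y)^x \geq 1 - xy$.

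There is no real obstacle here. The only points needing care are that the hypothesis $x \geq 1$ is precisely what makes the exponent $x-1$ nonnegative, and hence makes $(1-y)^{x-1} \leq 1$ hold. Also, the restriction $y < 1$ keeps the base $1-y$ strictly positive, so that real powers and their derivatives are well defined. As a sanity check, the inequality can fail for $x \in (0,1)$; for example, $(1-y)^{1/2} < 1 - y/2$ for $y > 0$. This confirms that the sign of $x-1$ is the essential ingredient.
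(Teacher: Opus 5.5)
Your proof is correct. You have $g(0)=0$ and $g'(y)=x\bigl(1-(1-y)^{x-1}\bigr)\ge 0$ on $[0,1)$, since the base lies in $(0,1]$ and the exponent $x-1$ is nonnegative; the mean value theorem then gives $g\ge 0$. The paper gives no proof of this lemma and simply recalls it as a standard inequality, so there is nothing to compare against. Your choice of a real-exponent argument over induction is the right one, because the paper applies the lemma with exponent $x=n^\delta$, which need not be an integer.
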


\subsection{Preprocessed \threesum{} Techniques}

We present two key tools used in preprocessed \threesum{} below as lemmas. Attributed to folklore, these lemmas were presented explicitly in \cite{preprocessedSOSA2025}.

\paragraph{\boldmath Computing sumsets mod $p$ via FFT.}
\begin{lemma}[{\cite[Lemma $2.1$]{preprocessedSOSA2025}}]
\label{lm:fft}
    Given sets $A, B$ of at most $n$ integers, and a positive integer $p$, we can compute the multiset $H \coloneq A + B \pmod p$ in time $O(n + p\log p)$.
\end{lemma}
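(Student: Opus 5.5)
We reduce to cyclic convolution over $\mathbb{Z}_p$. First reduce every element of $A$ and $B$ modulo $p$, in $O(n)$ time. Then form the indicator (count) vectors $f_A, f_B \in \mathbb{Z}_{\ge 0}^{p}$, where $f_A[r] = |\{a \in A : a \equiv_p r\}|$ and $f_B$ is defined the same way. Initializing these vectors and filling them takes $O(n + p)$ time.

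The multiset $H = A + B \pmod p$ is described by its multiplicity vector $h$, defined by
\[
h[s] = \sum_{r=0}^{p-1} f_A[r]\, f_B[(s - r) \bmod p].
\]
This is exactly the cyclic convolution $f_A \ast f_B$ over $\mathbb{Z}_p$. We compute it as follows.

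\begin{itemize}
\item Zero-pad both vectors to length $2p$, or to the next power of two.
\item Compute their linear convolution with FFT in $O(p\log p)$ arithmetic operations.
\item Fold the result: add entry $s + p$ into entry $s$ for each $s < p$.
\end{itemize}

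Alternatively, one can use a length-$p$ cyclic FFT directly via Bluestein's algorithm.

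The one point needing care is exactness, since the entries of $h$ are integers bounded by $n^2$. There are two standard ways to guarantee the output is correct.

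\begin{itemize}
\item Use a number-theoretic transform modulo a prime $q > n^2$ of the form $q = c\cdot 2^k + 1$. Then all values are exact and fit in $O(1)$ words.
\item Use floating-point FFT with $O(\log n)$ bits of precision and round the results, which is standard in the word-RAM model.
\end{itemize}

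Either way, the convolution costs $O(p \log p)$ word operations.

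Finally, we output $h$ as the multiset $H$. Summing the costs gives $O(n + p\log p)$ total time: $O(n)$ for reduction and bucketing, $O(p)$ for initialization and folding, and $O(p\log p)$ for the FFT. I expect no real obstacle here; the only subtle step is the exact-arithmetic justification, which is handled by the modular NTT argument.
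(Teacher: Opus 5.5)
Your proposal is correct and takes the same route as the paper. The paper reduces the computation to multiplying two degree-$p$ polynomials (your count vectors $f_A, f_B$) via FFT and reading off coefficients modulo $x^p - 1$, which is your zero-pad-and-fold step; it defers the details to \cite{preprocessedSOSA2025}, and your accounting of the $O(n)$ bucketing and of exact integer arithmetic fills in that sketch.
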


We briefly note that computing the multiset of $A + B$ mod $p$ can be reduced to polynomial multiplication of two degree $p$ polynomials, which can be quickly computed via FFT (see \cite{preprocessedSOSA2025} for a detailed proof). Given the ease of sumset computation for a small modulus, a natural obstacle that arises in preprocessed \threesum{} is the occurrence of false positives, i.e., triplets of the form $(a, b, c) \in A \times B \times C$ such that $a + b \equiv_p c $ but $a + b \neq c$. The following lemma bounds the number of such false positives.


\paragraph{\boldmath Bounding false positives mod $p$.}
\begin{lemma}[{\cite[Lemma 2.2]{preprocessedSOSA2025}}] \label{lm:falsepositives}
    Let $A, B, C \in [U]$ be sets of integers bounded by $U$ such that $\forall (a, b, c) \in A \times B \times C$, we have $a + b \neq c$. Let $p$ be a prime sampled uniformly from $[R, 2R]$, for some positive integer $R$. The expected number of false positives is
    \[\E[\#\{(a, b, c) \in A \times B \times C : a + b \equiv_p c \}] \leq O\left(\frac{|A||B||C| \log U}{R}\right).\]

\end{lemma}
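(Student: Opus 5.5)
The plan is the standard argument for Lemma~\ref{lm:falsepositives}: fix a single triple and bound the probability that the random prime turns it into a false positive, then apply linearity of expectation over all triples.

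Fix $(a,b,c)\in A\times B\times C$. By hypothesis $d \coloneqq a+b-c \neq 0$. Since all elements lie in $[U]$, we have $|d| \le 2U$. The triple is a false positive exactly when $p \mid d$. A nonzero integer of absolute value at most $2U$ has at most $\log_R(2U)$ distinct prime divisors that are $\ge R$, because their product divides $|d|$ and each factor is at least $R$. Hence at most $O(\log U/\log R)$ primes in $[R,2R]$ divide $d$.

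Next we need a lower bound on how many primes the interval contains. By the prime number theorem (or Chebyshev's bound, which suffices), $[R,2R]$ contains $\Omega(R/\log R)$ primes for all sufficiently large $R$; small $R$ can be absorbed into the constant, since for $R=O(1)$ the claimed bound $|A||B||C|\log U/R$ is at least a constant multiple of the trivial bound $|A||B||C|$. Since $p$ is uniform over these primes,
\[
\Pr[p \mid d] \le \frac{O(\log U/\log R)}{\Omega(R/\log R)} = O\!\left(\frac{\log U}{R}\right).
\]

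Summing this bound over all $|A||B||C|$ triples and using linearity of expectation gives
\[
\E[\#\text{false positives}] \le O\!\left(\frac{|A||B||C|\log U}{R}\right).
\]

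The only nontrivial input is the density of primes in $[R,2R]$, i.e.\ Chebyshev's estimate; the rest is routine. There is one minor point of care. If $U<R$, then $|d|\le 2U<2R$, so $d$ has at most one prime divisor in $[R,2R]$. The bound still holds in this case, since $\log U\ge 1$ is then assumed as usual.
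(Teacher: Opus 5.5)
Your proof is correct and is essentially the argument the paper sketches. A nonzero $a+b-c$ with $|a+b-c|\le 2U$ has at most $\log_R(2U)$ prime divisors in $[R,2R]$, and the interval contains $\Omega(R/\log R)$ primes, so linearity of expectation over the $|A||B||C|$ triples gives the bound; the Markov step the paper mentions is only for its follow-up remark that half the primes are good, and your closing remark about $U<R$ is unnecessary since the $\log_R(2U)$ divisor bound already covers that case.
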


The proof (see \cite{preprocessedSOSA2025}) proceeds by a counting argument, with the key observation that only $\log_r(2U)$ of the primes in the range $[r, 2r]$ can divide $|a + b - c|$. We observe by Markov's inequality that at least $1/2$ of the primes in this range must have at most $O(|A||B||C|\log U / r)$ false positives.




We note that the set of false positives $F \coloneqq \{(a,b,c)\in A\times B \times C : a+b\equiv_p c \land a+b\neq c\}$ can be constructed in $O(|A||B| + |C| + r + |F|)$ time  (see the preprocessing step under Theorem $3$ of \cite{preprocessedSOSA2025}). This will be $O(n^2 + |F|)$ time for the case that $|A||B|, |C|, r = O(n^2)$, as we have in Section \ref{sec:main-result}. See \cite[Lemma 3.2]{FischerKaPo24} for a deterministic variant of \cref{lm:falsepositives}.


\subsection{\threesum{} Indexing Techniques}
The main technique used for \threesum{} indexing has been function inversion. Function inversion is a widely studied data structures problem in cryptography \cite{fiatNaor, GolovnevGuPe23, CorriganKo19, Hellman80}. The goal is to give an optimal tradeoff curve between time $T$ and space $S$ for a data structure inverting a function $f: [N] \to [N]$ with pre-processing time $\tilde{O}(N)$. $T$ bounds the number of queries to $f$ as well as the online computation upon a query. Fiat and Naor \cite{fiatNaor} present a randomized data structure offering a general tradeoff curve of $TS^3 = \tilde{O}(N^3)$, which was recently improved \cite{GolovnevGuPe23} to $TS^2 \max(S, T) = \tilde{O}(N^3)$ (our algorithm, however, performs function inversion on the regime in which these two curves are equivalent). Below, we describe the more specific tradeoff provided by Fiat-Naor function inversion for functions with bounded in-degree.
\paragraph{Fiat-Naor Function Inversion \cite{fiatNaor}.}
\begin{lemma}[\cite{fiatNaor}] \label{lm:fiatnaor}
    Given a function $f: [N] \to [N]$, we can construct a data structure, in pre-processing time $\widetilde{O}(N)$, with time/space $\widetilde{O}(T)$ and $\widetilde{O}(S)$ satisfying the tradeoff $TS^2 = N^3 q(f)$, where $q(f)$ denotes the probability that $f(x) = f(y)$ for two uniformly sampled $x, y \in [N]$. Thus if, for all $y \in [N]$, $|f^{-1}(y)| \leq \Delta$, i.e., $\Delta$ is the maximum in-degree of $f$, we can achieve a time/space tradeoff of $TS^2 = N^2 \Delta$. 
\end{lemma}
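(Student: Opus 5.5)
The second sentence of \cref{lm:fiatnaor} follows from the first by bounding the collision probability. Write $q(f)=\Pr_{x,y}[f(x)=f(y)]=\frac{1}{N^2}\sum_{y\in[N]}|f^{-1}(y)|^2$. If every fiber has size at most $\Delta$, then $\sum_y |f^{-1}(y)|^2\le \Delta\sum_y|f^{-1}(y)|=\Delta N$. Hence $q(f)\le \Delta/N$ and $N^3q(f)\le N^2\Delta$. So the main work is the general tradeoff $TS^2=\widetilde{O}(N^3q(f))$. I would follow the Hellman/Fiat--Naor chain construction.

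\textbf{Chain tables.} Choose parameters $m$ (chains per table), $t$ (chain length) and $\ell$ (number of tables). For each table $j\in[\ell]$, pick a pairwise-independent rerandomization $h_j:[N]\to[N]$ and set $g_j=h_j\circ f$. From $m$ random start points $x$, iterate $g_j$ for $t$ steps. Store only the pairs (endpoint, start point), sorted by endpoint. This costs $\widetilde{O}(m\ell)$ space and $\widetilde{O}(mt\ell)=\widetilde{O}(N)$ preprocessing once $mt\ell\approx N$.

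\textbf{Queries.} To invert $y$, for each $j$ iterate $z\mapsto g_j(z)$ starting from $h_j(y)$ for up to $t$ steps, looking each value up among the stored endpoints. On a hit, walk forward from the stored start point to find a preimage of $y$. The total cost is $T=\widetilde{O}(t\ell)$, plus the cost of false alarms.

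\textbf{Removing heavy images (the Fiat--Naor step).} Values with large in-degree make chains merge, which destroys coverage. To fix this, sample about $S$ random inputs and record their images. Any $y$ with $|f^{-1}(y)|\gtrsim N/S$ is, with high probability, hit by some sample. Store these frequent images together with one preimage each, in a table of size $\widetilde{O}(S)$. Then modify each $g_j$ so that when $f(x)$ lands in this stored set, the chain is redirected through a fresh pseudorandom value instead of continuing from $f(x)$; this is the Fiat--Naor trick. A query whose target lies in the stored set is answered by direct lookup.

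\textbf{Coverage analysis (main obstacle).} One must show that each table covers $\Omega(mt)$ distinct points, i.e.\ that chains rarely merge. The number of merges within one table is at most roughly $(mt)^2$ times the collision probability of the modified function, and this probability is at most $q(f)$ up to polylog factors. Taking $mt^2q(f)\lesssim 1$ therefore keeps merges rare and gives constant coverage per table. With $\ell\approx N/(mt)$ tables, every point is covered with high probability, which also relies on the independence of the $h_j$. The delicate part is making this independence argument rigorous when $f$ is adversarial; I would follow the accounting of \cite{fiatNaor} and use $k$-wise independent hash families in place of truly random functions.

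\textbf{Balancing.} With $S=m\ell$, $T=t\ell$ and $m\ell t=N$, we get $TS^2=t\ell\cdot m^2\ell^2=N\cdot m\ell^2$. Substituting $\ell=N/(mt)$ gives $N\cdot N^2/(mt^2)$. The coverage constraint allows $mt^2\approx 1/q(f)$, so this becomes $N^3q(f)$. This gives $TS^2=\widetilde{O}(N^3q(f))$ and, by the first paragraph, $TS^2=\widetilde{O}(N^2\Delta)$ for in-degree at most $\Delta$.
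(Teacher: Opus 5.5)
The paper does not prove this lemma. It imports $TS^2=N^3q(f)$ from Fiat--Naor as a black box and passes to the in-degree version with a bare ``Thus''.

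Your first paragraph supplies exactly that missing step: $q(f)=\frac{1}{N^2}\sum_y|f^{-1}(y)|^2\le\Delta/N$. The rest of your proposal opens the black box by sketching the Hellman/Fiat--Naor chain scheme, and your balancing is correct (equivalently $t=N/S$, $m=N/T$, $\ell=ST/N$, $mt^2=1/q(f)$). You defer the limited-independence coverage argument to Fiat--Naor, so your route gains transparency but rests on the same citation as the paper.

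A few points in the sketch should be tightened.
\begin{itemize}
\item Pairwise-independent $h_j$ do not suffice for the chain analysis. You need the $k$-wise independence you invoke later.
\item The heavy-image removal is Fiat--Naor's device for their separate $TS^3=N^3$ bound. It is not needed for the $N^3q(f)$ bound, though it is harmless.
\item Your merge count is misstated. With $mt^2q(f)\approx 1$, the expected number of colliding pairs is $(mt)^2q(f)\approx m$, so merges are not rare. What is true is that each chain merges with probability about $t\cdot mt\cdot q(f)=mt^2q(f)$, a small constant. Hence a constant fraction of chains stay intact, and each table covers $\Omega(mt)$ points.
\item The same estimate gives $O(1)$ expected false alarms per table, each costing $O(t)$. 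This is what lets you absorb the false-alarm term you left open into $T=\widetilde{O}(t\ell)$.
\item Covering every point with high probability needs $\ell=\Theta\bigl((N/(mt))\log N\bigr)$. This costs only a logarithmic factor.
\end{itemize}
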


We now prove the following corollary, which exactly describes the regime in which our algorithm performs function inversion. The proof utilizes a hashing argument of \cite{dataStructuresFor3sum2020} as well as the more specific Fiat-Naor tradeoff for bounded in-degree functions.
\begin{cor}\label{cor:fiatnaor}
    We can perform Fiat-Naor function inversion for a function $f:U\times V \to W$ satisfying $|W|,|U|\cdot |V| = \widetilde{O}(n^{2-\delta})$, of $\widetilde{O}(1)$ maximum in-degree, in space $S = n^{2-\eta}$ and query time $T = n^{2\eta - 2\delta}$, for any $\eta \in [\delta, 2]$. The pre-processing time is $\widetilde{O}(n^{2-\delta})$.
\end{cor}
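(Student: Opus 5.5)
The plan is to reduce to \cref{lm:fiatnaor} by converting $f$ into a function from a domain of size $N$ to itself, where $N=\widetilde{O}(n^{2-\delta})$, while keeping the maximum in-degree polylogarithmic.

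\textbf{Step 1: embed the domain.} Identify $U\times V$ with an initial segment of $[N]$ via $(u,v)\mapsto (u-1)|V|+v$. Take $N \ge \max(|U||V|,|W|)$, so $N=\widetilde{O}(n^{2-\delta})$. Pad the domain with dummy points that map to fresh distinct values, so that no in-degree increases.

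\textbf{Step 2: compress the range with a hash.} The range $W$ may be encoded by large integers; for instance, sums $a+b$ are up to $n^{\alpha}$. Following the hashing argument of \cite{dataStructuresFor3sum2020}, pick a random hash $h:W'\to[N]$ from a pairwise independent family, and set $g=h\circ f:[N]\to[N]$.
\begin{itemize}
\item Inverting $g$ on $h(w)$ returns some $x$ with $h(f(x))=h(w)$. We then check $f(x)=w$ directly in $\widetilde{O}(1)$ time.
\item To avoid missing true preimages because of collisions, bound $q(g)$ instead of the raw maximum in-degree. Two random domain points collide under $g$ either because $f(x)=f(y)$, with probability at most $\Delta/N=\widetilde{O}(1/N)$, or because $h$ collides on distinct images, with probability $O(1/N)$ in expectation.
\item Hence $\E[q(g)]=\widetilde{O}(1/N)$. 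By Markov, a hash with $q(g)=\widetilde{O}(1/N)$ is found after $O(1)$ expected trials, and each trial is checked in $\widetilde{O}(N)$ time.
\end{itemize}
The general form of \cref{lm:fiatnaor}, $TS^2=N^3q(g)$, then gives $TS^2=\widetilde{O}(N^2)=\widetilde{O}(n^{4-2\delta})$.

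\textbf{Step 3: set parameters.} Choose $S=n^{2-\eta}$. Then $T=\widetilde{O}\!\left(n^{4-2\delta}/n^{4-2\eta}\right)=\widetilde{O}(n^{2\eta-2\delta})$, as claimed. The range $\eta\in[\delta,2]$ guarantees $S\le N$ and $T\ge 1$, up to polylog factors, so the tradeoff point is legal. By \cref{lm:fiatnaor}, preprocessing takes $\widetilde{O}(N)=\widetilde{O}(n^{2-\delta})$ time. This assumes $f$ is evaluable in $\widetilde{O}(1)$ time, which holds for the intended $f(a,b)=a+b$ on indices. The hash search adds only $\widetilde{O}(N)$ expected time.

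\textbf{Main obstacle.} The delicate point is that Fiat--Naor inverts only to \emph{some} preimage of $g$, and that preimage may be a false hash collision rather than a true preimage under $f$. I would handle this by relying on the collision-probability form of \cref{lm:fiatnaor}. If a more robust guarantee is needed, I would instead use $O(\log n)$ independent hashes and accept an answer only after verifying $f(x)=w$. With high probability, some hash then isolates each $w$ from colliding images, and this costs only polylogarithmic factors in both $S$ and $T$.
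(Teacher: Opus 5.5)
Your main line, a single hash $h$ in Step~2, does not give a correct inverter, and the collision-probability form of \cref{lm:fiatnaor} does not fix it. The bound $TS^2=N^3q(g)$ controls only time and space; Fiat--Naor still returns an \emph{arbitrary} preimage of $h(w)$ under $g$. If some image value $w'\neq w$ has $h(w')=h(w)$, the structure may return an $x$ with $f(x)=w'$. Your check $f(x)=w$ then rejects it, and the true preimage of $w$ is never reported. With up to $N$ image values hashed into $[N]$, typically a constant fraction share a bucket, so a constant fraction of queries can fail. This matters for how the corollary is used: the non-heavy-hitter query needs every $c$ with a preimage in $A_i\times B$ to actually be found. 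So the ``more robust'' variant at the end of your proposal is necessary, not optional.

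When you adopt it, make it quantitatively sound. Take $k=O(\log n)$ independent hashes into a range a constant factor larger than the image, say $[4N]$. Build one Fiat--Naor structure per $g_j=h_j\circ f$, query all of them, and verify each candidate.
\begin{itemize}
\item By pairwise independence, a fixed $w$ fails to be isolated under $h_j$ with probability at most $|f(U\times V)|/4N\le 1/4$. After your Markov resampling of $h_j$ for small $q(g_j)$, this is at most $1/2$.
\item A union bound over the at most $N$ image values then shows every $w$ is isolated by some $h_j$ with high probability.
\item Under an isolating $h_j$, every $g_j$-preimage of $h_j(w)$ is a genuine $f$-preimage of $w$.
\end{itemize}
With range exactly $[N]$, as you wrote it, the union bound gives no useful isolation probability.

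The repaired argument is essentially the paper's proof. The paper samples $\log n$ random primes following \cite{dataStructuresFor3sum2020}, checks during preprocessing that each sumset value avoids collisions under some prime, keeps one Fiat--Naor structure per prime, and verifies candidates. Your use of $q(g_j)$ instead of the in-degree form is a reasonable refinement, since hash collisions can push the in-degree of $g_j$ above $\Delta$.
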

\begin{proof}
    We observe that the domain and co-domain are both of size $N = n^{2 - \delta}$, and that $\Delta = \widetilde{O}(1)$. Thus we can hash the domain and co-domain to $[N]$ via the technique of \cite{dataStructuresFor3sum2020}. We briefly summarize the hashing argument here (see Section 3 of \cite{dataStructuresFor3sum2020} for a full exposition). In short, if we sample a uniformly random prime $p_i$ from the interval $[n^2, 6\alpha n^2 \log n]$, we can argue that there is a constant probability for any fixed element $z_1 \in A + B$ in the sumset, that $z_1$ does not collide under $p_i$ with any distinct $z_2 \in A + B$, where we say $z_1$ and $z_2$ collide under $p_i$ if $z_1 \equiv_{p_i} z_2 $. This follows via a counting argument, as there are at least $2\alpha n^2$ primes in the range (for sufficiently large $n$, by the prime number theorem), and at most $\binom{n}{2} \log_{n^2} n^\alpha < n^2 \alpha/4$ of them divide any $z_2 - z_1$ over all $z_2 \in A + B$. Thus, a uniformly random prime $p$ from the aforementioned interval will avoid any collisions between $z_1 \in A+B$ and all other $z_2 \in A + B$ with constant probability. Thus we can sample $\log n$ such primes during pre-processing, and check that every $c \in A + B$ avoids collisions under some $p_i$. This constraint will guarantee no false negatives are reported, and we avoid false positives by checking $3$-sum solutions against the actual original $(a, b, c)$ triplet before reporting. We then maintain a Fiat-Naor data structure$\pmod {p_i}$ for each prime $p_i$, at the cost of a $\log n$ time/space blow up. (See the proof of Theorem 3.3 in \cite{dataStructuresFor3sum2020} for a detailed accounting that $\log n$ primes is sufficient, and see the proof of Lemma 3.2 in \cite{dataStructuresFor3sum2020} for a detailed description of Fiat-Naor function inversion for \threesum{} modulus $p$.) By applying Lemma \ref{lm:fiatnaor}, we can then achieve time/space tradeoff $TS^2 = \tilde{O}((n^{2 - \delta})^2)$ for function inversion, with pre-processing time $\tilde{O}(N) = \tilde{O}(n^{2-\delta})$. The choices $S = n^{2 -\eta}$ and $T = n^{2\eta - 2\delta}$ satisfy the equation (where $\eta$ is a free variable).
\end{proof}

    \section{Preprocessed 3SUM for Unknown $C$}
\label{sec:main-result}
We can now prove our main result. As noted in the technical overview, our algorithm handles values differently based on their multiplicity. We describe each part of the algorithm separately.
\SubquadraticUnknownC*

In fact, our algorithm returns for every $c\in C'$ "yes" if there exists a pair $a,b\in A'\times B'$ that sums to $c$ and "no" otherwise.

\subsection{Heavy Hitters}
Let $\delta>0, r>1$ be parameters to be set later. We begin by handling the values of $A+B$ with multiplicity greater than $n^\delta$. Define the set of \emph{heavy hitters} as follows.

\begin{definition}[Heavy hitters]
    $\widetilde{C}\coloneqq\{c\in A+B : |\{(a,b)\in A\times B : a+b = c\}|\geq n^\delta\}.$
\end{definition}

As every element in $\widetilde{C}$ has $\geq n^\delta$ pairs in $A\times B$ that sum to it, we can bound the size of the set by $|\widetilde{C}|\leq n^{2-\delta}$.
We repeat the following algorithm in parallel $ \log n$ times in both the preprocessing and query phases.

\begin{mdframed}
\begin{description}
\item[Preprocessing:] To preprocess the heavy hitters, our algorithm first computes and stores the set $\widetilde{C}$. Next, we randomly sample a prime $p$ from the range $[n^r, 2n^r)$, compute and store the set of false positives in $A\times B\times \widetilde{C}$ mod $p$:
\[
F \coloneqq \{(a,b,c)\in A\times B \times \widetilde{C} : a+b\equiv_p c \land a+b\neq c\}.
\]

We store the set $F$ as a dictionary indexed by each tuple's $c$ coordinate, so that given a value $c$ we can find all triples $(a,b,c)\in F$ in constant time per triple.

\end{description}
\end{mdframed}

We can construct $F$ in time $\widetilde{O}(n^2 + |F|)$ and using \cref{lm:falsepositives} with $R=n^r$, bound the size of $F$ by $\widetilde{O}(\frac{n^2\cdot |\widetilde{C}|}{p}) \leq  \widetilde{O}(n^{4-\delta - r})$. 
In total, this preprocessing step stores the sets $\widetilde{C}$ and $F$, using space $\widetilde{O}(n^{2-\delta} + n^{4-\delta - r})$.

\begin{mdframed}
\begin{description}

\item[Query:] Given the sets $A',B',C'$, compute the multiset $H \coloneqq A'+B'$ mod $p$, using \cref{lm:fft} in time $\widetilde{O}(p)=\widetilde{O}(n^r)$. For every $c\in C'\cap \widetilde{C}$, query all triples $(a,b,c)\in F$. Count the number of triples such that $a\in A',b\in B'$. If this number is smaller than $H[c]$, return "yes" for $c$, otherwise return "no".
\end{description}
\end{mdframed}

We note that at query time, our algorithm does not read the entire set $F$, but only the false positives $F'\subseteq F$ such that their $c$ value is in $C'$, 
\[
F' \coloneqq \{(a,b,c)\in A\times B \times (\widetilde{C} \cap C') : a+b\equiv_p c \land a+b\neq c\}.
\]

For any $c\in \widetilde{C}\cap C'$, if there exists a pair $(a,b)\in A'\times B'$ that sums to $c$, then $H[c]$ will be greater than the number of triples $(a, b, c)$ in $F$ that have $c$ as their third coordinate with $(a, b) \in A' \times B'$. This is because $H[c]$ counts all triples $(a,b,c)$ with $(a, b) \in A' \times B'$ where $a+b \equiv_p c$, which includes any triples in $F$ with first two coordinates from $A' \times B'$ and third coordinate $c$ as well as at least one true (non-false-positive) sum $(a,b,c)$. Hence, the algorithm will output "yes" on any such value $c\in \widetilde{C}\cap C'$.

By \cref{lm:falsepositives}, we can bound the size of $F'$ by $\widetilde{O}(\frac{n^2 \cdot |\widetilde{C}\cap C'|}{p})\leq \widetilde{O}(n^{3-r})$. By Markov's inequality, our algorithm can run in $\widetilde{O}(n^{3-r})$ time and scan all of $F'$ with constant probability and so repeating this $\log n$ times will guarantee success with high probability. Thus, the total runtime of this query algorithm is $\widetilde{O}(n^r + n^{3-r})$.

\subsection{Non-Heavy Hitters}
Next we handle the non-heavy hitters. Let $\eta > 0$ be a parameter to be set later. Define $\ell = \lceil n^\delta \log n \rceil$. 
We repeat the following algorithm in parallel $4 \log n$ times in both the preprocessing and query phases.

\begin{mdframed}
\begin{description}
\item[Preprocessing:] Construct the sets $A_1, A_2, \ldots, A_\ell$ by assigning each element of $A$ to one of the sets $A_i$ uniformly at random.
We would like to preprocess a Fiat-Naor function inversion instance on each tuple $A_i\times B$. However, we would like to claim that $f_i:A_i \times B\to A_i+B$ defined by $(a,b)\mapsto a+b$ has bounded in-degree. To obtain this, compute $M = \max (A+B)$ and define:
\[
f(a,b) = 
\begin{cases}
    a+b & a+b \notin \widetilde{C}, \\
    a\cdot M + b & a+b \in \widetilde{C}.
\end{cases}
\]
To be able to compute $f$, we store $M$ and $\tilde{C}$ using expected $\widetilde{O}(n^{2-\delta})$ space. As we show in \cref{claim:f-bound}, with probability at least $1 - \frac{\ell}{n^{6}}$, each $f_i$ has in-degree at most $O( \log n)$.

Now, for every $i\in [\ell]$, preprocess a bounded-in-degree Fiat-Naor function inversion data structure $Q_i$ on $f_i \coloneqq f|_{A_i\times B}$ using \cref{cor:fiatnaor}. This uses $\widetilde{O}(n^{2\eta-2\delta})$ query time and $\widetilde{O}(n^{2-\eta})$ space.

\end{description}
\end{mdframed}

Note that since all values of $A,B$ are positive, $f$ is an injective function on pairs $(a,b)$ such that $a+b\in \widetilde{C}$.
In total, we store $\widetilde{O}(n^\delta)$ instances of a Fiat-Naor data structure, as well as $|\widetilde{C}|$ values of $U$. Thus, this preprocessing step uses expected $\widetilde{O}(n^{2-\delta} + n^{2 - \eta + \delta})$ space.

\begin{mdframed}
\begin{description}
\item[Query:] Given the sets $A',B',C'$, perform the following query. For every $c\in C'\setminus \widetilde{C}$, query $f_i^{-1}(c)$ for every $i\in [\ell]$ using the preprocessed function inversion data structure $Q_i$. If a pair $(a,b)\in A_i\times B$ is returned, check if $(a,b)\in A'\times B'$ and if so, return "yes" for this $c$. If no such pair is found for any $i$, return "no".

\end{description}
\end{mdframed}

In total, for every $c\in C'\setminus \widetilde{C}$ we perform $\widetilde{O}(n^\delta)$ calls to query a Fiat-Naor function inversion data structure for a total query time of $\widetilde{O}(n\cdot n^\delta \cdot n^{2\eta - 2\delta}) = \widetilde{O}(n^{1+2\eta - \delta})$.

The following two claims show the correctness of this stage of the algorithm. First we show that the functions $f_i$ do in fact have low in-degree, justifying our use of \cref{cor:fiatnaor}.

\begin{claim}\label{claim:f-bound}
    In each repetition, the $\ell$ functions $f_i$ have in-degree at most $11 \ln n$ with probability at least $1 - \frac{\ell}{n^{6}}.$
\end{claim}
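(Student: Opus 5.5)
Fix one repetition and one part $A_i$; we bound the in-degree of $f_i$ at a single value $z$, then take a union bound. There are two kinds of values.

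If $z = a\cdot M + b$ for some pair with $a+b\in\widetilde{C}$, then the preimage under $f$ is the single pair $(a,b)$. This uses injectivity of $f$ on heavy pairs, which holds because all values are positive and $b \le M$; strictly we want $b<M$, or else we use $M+1$, which changes nothing. Such a $z$ could in principle also equal $a'+b'$ for a non-heavy pair, but $a'+b'\le M \le a\cdot M+b$ with equality essentially impossible, so after a harmless shift we may treat the two kinds of image as disjoint. Either way, the heavy contribution to any fiber is at most one.

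The main case is a value $c=a+b\notin\widetilde{C}$, so $\mathrm{mult}(c)<n^\delta$. For each $a\in A$ there is at most one $b$ with $a+b=c$, so the pairs summing to $c$ have distinct first coordinates $a_1,\dots,a_m$ with $m<n^\delta$. The in-degree of $f_i$ at $c$ is therefore $X=\sum_{j=1}^m \mathbf{1}[a_j\in A_i]$. This is a sum of independent Bernoulli$(1/\ell)$ variables, since the $a_j$ are distinct and are assigned independently, so
\[
\mu=\E[X]=m/\ell\le n^\delta/(n^\delta\log n)\le 1 .
\]

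To apply the Chernoff bound stated above, I pad $X$ with extra independent Bernoulli variables (or use monotonicity) so that the mean becomes exactly $\mu=1$. Setting the threshold to $(1+\delta')\mu=11\ln n$ gives $\delta'=11\ln n-1$ and
\[
\Pr[X\ge 11\ln n]\le \exp\!\Big(-\tfrac{\delta'^2}{\delta'+2}\Big)\le \exp\big(-(\delta'-2)\big)=e^{3}\,n^{-11}.
\]
The exponent $11$ leaves slack for the union bound: there are at most $n^2$ values $c\in A+B$, and multiplying gives at most $e^3 n^{-9}\le n^{-6}$ per part $i$ for large $n$. A union bound over the $\ell$ parts then gives failure probability at most $\ell/n^6$, which is the claim.

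The main obstacle is making the Chernoff step rigorous when $\mu<1$, because the stated bound is phrased relative to $\mu$. The padding (or stochastic domination) argument handles this. The remaining care is in the constant bookkeeping: making sure $\ln$ versus $\log$ and the factor $e^3$ still fit under $n^{-6}$ after the union bound, which holds with room to spare.
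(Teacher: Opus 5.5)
Your proof is correct and matches the paper's argument: fix a non-heavy value $c$ and a part $A_i$, view the in-degree as a sum of independent Bernoulli$(1/\ell)$ indicators with mean below $1$, apply Chernoff, and union bound over the at most $n^2$ values and $\ell$ parts (your $n^2$ is the count that actually yields $\ell/n^6$; the paper's text says ``at most $n$''). The only differences are cosmetic. The paper skips your padding step by substituting $\delta = 10\ln n/\mu$ directly into Chernoff, which gives $e^{-100\ln^2 n/(10\ln n+2\mu)}\le n^{-8}$ for any $\mu<1$, and it treats injectivity of the heavy encoding $aM+b$ in a remark outside the proof rather than inside it as you do.
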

\begin{proof}
    It suffices to argue that for every $c \notin \widetilde{C}$, the in-degree of $f_i$ for $c$ is at most $11 \ln n$ with high probability. 

    Fix one $c \notin \widetilde{C}$ and one $f_i$. Denote by $\mu$ the expected number of pre-images $X$ for $f_i$ for $c$.
 By applying a Chernoff bound, we have that
    
   \[
   \Pr(X\geq 11\ln n)\leq \Pr\left(X\geq \left(1+\frac{10\ln n}{\mu}\right)\mu \right) \leq e^{-\frac{100 \ln^2 n /\mu}{10\ln n /\mu + 2}}.
   \]
   Bounding further, we obtain
 $
  \Pr(X\geq 11\ln n)\leq e^{-\frac{100 \ln^2 n}{10\ln n + 2\mu}}$.
   For each $c \notin \widetilde{C}$, note that the expected number $\mu$ of pre-images $X$ for $f_i$ for $c$ is less than 1. Hence:
  
\[\Pr(X\geq 11\ln n)
  \leq e^{-\frac{100 \ln^2 n}{10\ln n + 2}}\leq e^{-\frac{100 \ln^2 n}{12\ln n}}\leq 1/n^8.\]

    Therefore, taking a union bound over all (at most $n$) $c \notin \widetilde{C}$ and all (at most $\ell$) functions $f_i$, we have that with probability at least $1 - \frac{\ell}{n^{6}}$, the functions $f_i$ all have in-degree at most $11 \ln n = O(\log n)$. This allows us to use \cref{cor:fiatnaor} and preprocess a Fiat-Naor function inversion data structure for functions with logarithmically bounded in-degree.
\end{proof}

Next, we show that this process finds every pair that sums to a value $c\in C'\setminus \widetilde{C}$. Thus, if such a pair exists in $A'\times B'$ the algorithm will be able to find it.

\begin{claim}
    For every $(a, b, c)$ such that $a + b = c$ and $c \notin \widetilde{C}$, a query to $c$ uncovers $(a, b)$ with probability at least $1 - \frac{1}{n^4}.$
\end{claim}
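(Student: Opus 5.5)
Fix a triple $(a,b,c)$ with $a+b=c$ and $c\notin\widetilde{C}$, so $\text{mult}(c) < n^\delta$. The plan is to show that in a single repetition, with constant probability, $(a,b)$ is the \emph{unique} preimage of $c$ under $f_i$, where $i$ is the index with $a\in A_i$. In that case $Q_i$, being a correct inversion structure, must return $(a,b)$ when queried on $c$. Independence across the $4\log n$ repetitions then amplifies the failure probability to $n^{-4}$.

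\textbf{Step 1: reduce to a collision event.} For $c\notin\widetilde{C}$ we have $f(a',b')=c$ exactly when $a'+b'=c$. This needs one check for the heavy branch: if $a'+b'\in\widetilde{C}$, then $f(a',b') = a'M+b' > M \ge c$ (values are positive, $a'\ge 1$, $b'\ge 1$), so the two branches never collide. Hence $f_i^{-1}(c)=\{(a',b')\in A_i\times B : a'+b'=c\}$. Each $a'$ determines $b'=c-a'$, so the other preimages of $c$ correspond to the at most $n^\delta-1$ values $a'\neq a$ with $c-a'\in B$. Thus $(a,b)$ is the unique preimage in $A_i\times B$ exactly when none of these $a'$ is assigned to the same part as $a$.

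\textbf{Step 2: bound the collision probability.} Each $a'$ lands in $a$'s part independently with probability $1/\ell$. By Bernoulli's inequality,
\[
\Pr[\text{no collision}] = (1-1/\ell)^{\text{mult}(c)-1} \ge 1-\frac{n^\delta}{\ell} \ge 1-\frac{1}{\log n}.
\]
This is certainly at least $1/2$ for large $n$; a union bound would give the same.

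\textbf{Step 3: account for Fiat--Naor failure.} We also need $f_i$ to satisfy the in-degree hypothesis of \cref{cor:fiatnaor}, which \cref{claim:f-bound} gives with probability $1-\ell/n^6$. We further need $Q_i$ to answer this query correctly; the randomized Fiat--Naor structure succeeds on a fixed query with constant probability, say $\ge 1/2$ after the internal $\log n$-prime hashing of \cref{cor:fiatnaor}.

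\textbf{Step 4: amplify.} Combining Steps 2 and 3, one repetition uncovers $(a,b)$ with probability at least some constant, at least $1/2$ once we use Fiat--Naor's own high-probability amplification. The $4\log n$ repetitions use fresh randomness, so all of them fail with probability at most $2^{-4\log n}=n^{-4}$.

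The main obstacle is the success probability of the Fiat--Naor structure per query. \cref{lm:fiatnaor} is stated without an explicit success probability, and the structure only succeeds for most inputs. The first thing I would try is to argue that the randomness of the Fiat--Naor construction and of the partition are independent of the fixed query $c$ (oblivious adversary). Then the per-repetition success probability is at least a constant, and the $4\log n$ independent repetitions absorb it. If a cleaner constant is needed, I would fold an extra $O(\log n)$ factor of independent Fiat--Naor copies into the $\widetilde{O}$.
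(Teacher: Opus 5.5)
Your proof is correct and follows the paper's argument. In a single repetition, $(a,b)$ is the only preimage of $c$ in its part with probability at least $1-1/\log n$ by Bernoulli's inequality, and the $4\log n$ independent repetitions amplify this (your per-repetition bound of $1/2$, giving $2^{-4\log n}=n^{-4}$, works as well as the paper's $(1/\log n)^{4\log n}$). Your extra checks are points the paper's proof leaves implicit: that the heavy branch $a'M+b'>M\ge c$ cannot collide with $c$, and that the Fiat--Naor structure's own failure probability and the in-degree condition must be absorbed into each repetition.
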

\begin{proof}
    First, we consider a single repetition of the algorithm.
    
    Suppose $(a, b) \in A_i \times B$. There are at most $n^\delta$ other pairs $(a', b')$ such that $a + b = a' + b'$. For each such $a', b'$, the probability that $a' \in A_i$ is $\frac{1}{n^\delta \log n}$. Since there are at most $n^\delta$ such $a'$, the probability that all of them are not in $A_i$ is at least $(1 - \frac{1}{n^\delta \log n})^{n^\delta} \geq 1 - \frac{1}{\log n}$ by Bernoulli's inequality. 

    Since we repeat this experiment $4 \log n$ times, the probability that at least one of the repetitions outputs $(a, b)$ is
    \[
        1 - \left(\frac{1}{\log n}\right)^{4 \log n} \geq 1 - \frac{1}{n^4}.
    \]
\end{proof}

When given a query $A',B',C'$, if $a\in A',b\in B',c\in C'$ then the algorithm will check $(a,b)$ and thus return "yes" with probability $\geq 1-\frac{1}{n^4}$. By a union bound, if any such triple exists, $(a,b,c)\in A'\times B'\times C'$ such that $a+b=c$ and $c\notin \widetilde{C}$, the algorithm will output "yes" with high probability.

\subsection{Putting it All Together}
Combining the two algorithms above, given two sets $A,B$ we preprocess them to obtain the data structure for the heavy hitters and non-heavy hitters cases. This uses a total space of $\widetilde{O}(n^{2-\delta} + n^{4-\delta - r} + n^{2-\eta + \delta})$.

Upon query, given $A',B',C$, we perform the two query algorithms described above to handle $C'\cap \widetilde{C}$ and $C'\setminus \widetilde{C}$ separately, for a total query time of $\widetilde{O}(n^r + n^{3-r} + n^{1+2\eta - \delta})$.

Setting $\eta = \frac{1}{2} + \frac{\eps}{3}, r = \frac{3}{2} + \eps, \delta = \frac{1}{2}-\frac{\eps}{3}$ we get a query time of $\widetilde{O}(n^{1.5 + \eps})$ using space $\widetilde{O}(n^{2-2\eps/3})$.

The preprocessing time is dominated by constructing $\widetilde{C}$, in $\widetilde{O}(n^2)$ time, constructing $F$ in $\widetilde{O}(n^2 + |F|) = \widetilde{O}(n^2)$ time, and constructing $\widetilde{O}(n^\delta)$ instances of a Fiat-Naor function inversion data structure each in time $\widetilde{O}(n^{2-\delta})$. Thus, in total, our algorithm takes $\widetilde{O}(n^2)$ preprocessing time.
    \section{Discussion and Future Work}
We presented the first data structure for preprocessed \threesum{} with unknown $C$ that has truly subquadratic space and query time (and quadratic preprocessing).
We'll discuss some possible extensions.

\paragraph{What do we get with optimal function inversion?} The only known lower bounds for function inversion \cite{Yao90,DeTT10,GennaroGKT05} give that the space $S$ and query time $T$ must satisfy $$ST\geq \tilde{\Omega}(N).$$ While a matching upper bound is only known for random permutations \cite{Hellman80}, it is conceivable that a construction achieving this bound might be possible for arbitrary functions as no higher lower bounds are known.

Using a hypothetical optimal function inversion data structure achieving $ST=N$, just by plugging into our construction, we would obtain for every $\eps\in [0,1/2]$, a data structure for preprocessed \threesum{} with unknown $C$ with 
\begin{center}space $n^{2-\eps}$ and query time $n^{1.5+\eps}$.\end{center}

\paragraph{What tradeoff is desirable?}
A brute-force query algorithm would solve \threesum{} on each query without the need for preprocessing. This would achieve space $O(n)$ (just store $A$ and $B$) and query time $O(n^2)$\footnote{One can also shave off polylog factors as in \cite{bdp}.}. Interpolating between this and the space-$n^{2}$ and query-$n^{1.5}$ trade-off of \cite{preprocessedSOSA2025}, one might hope to be able to achieve for every $\eps\in [0,1/2]$,
\begin{center}space $n^{2-2\eps}$ and query time $n^{1.5+\eps}$.\end{center}

Unfortunately, even with optimal function inversion, our construction doesn't achieve such a result. 
Either new ideas are needed, or perhaps there is some limitation to achieving such a tradeoff.

\paragraph{Possible extensions?}
Our approach can be useful for other potential problems with preprocessing, beyond \threesum{}.
We give \threexor{} as an example. 

In the \threexor{} problem, one is given three $n$-sized sets of {\em binary strings} $A,B,C\subseteq \{0,1\}^d$. We want to know whether there exist $a\in A, b\in B, c\in C$ such that $a\oplus b\oplus c=0$. \threexor{} can be solved in $O(n^2)$ time deterministically and, with randomness, in expected time $O(n^2 (\log\log n)^2 / \log^2 n)$ \cite{DietzfelbingerScWa18}. It is conjectured (e.g., \cite{jaf}) that $n^{2-o(1)}$ time is necessary.

Analogous to \threesum{}, we can define the preprocessed \threexor{} with unknown $C$ problem as: Given $A,B\subseteq \{0,1\}^d$, preprocess them so that the following query can be supported efficiently: given $A'\subseteq A, B'\subseteq B$ and $C\subseteq \{0,1\}^d$, decide if there exist $a\in A', b\in B', c\in C$ such that $a\oplus b\oplus c=0$.

We can obtain essentially the same trade-off for \threexor{} as for \threesum{} for this preprocessed version, just by simulating some key steps of the \threesum{} algorithm with \threexor{} primitives.

The first primitive we need to simulate is instance reduction with a small number of false positives. For \threesum{} this was, given some $p$, to efficiently compute $A+B\mod p$ (with multiplicities) and also to pick $p$ so that the number of false positives is small. 

While XORing vectors mod a prime is not defined, Jafargholi and Viola \cite{jaf} provided an analogous tool for \threexor{}. They defined linear hash functions that take a vector $a\in \{0,1\}^d$ to a vector $h(a)\in \{0,1\}^{\ell}$ for some choice $\ell<<d$. This is essentially achieved by random linear combinations of the vector components. Because of linearity, if $a\oplus b\oplus c=0$, we also have $h(a)\oplus h(b)\oplus h(c) = 0$. Jafargholi and Viola \cite{jaf} showed that the expected number of false positives is $O(n^3/2^\ell)$. Thus, effectively, we can think of $p:=2^\ell$.

For \threesum{} we use the FFT to compute $A+B\mod p$ (with multiplicities) in $\tilde{O}(p)$ time. For \threexor{} one can use the Walsh-Hadamard transform to 
 compute for all $c\in \{0,1\}^{\log p}$ the number of pairs $(a,b)\in A\times B$ such that $c=a\oplus b$, i.e., the number of \threexor{} solutions that $c$ is part of. The running time is $O(p\log p)$  using the fast folklore recursive algorithm to compute the transform on vectors of length $\log p$.

The rest of our algorithm (Fiat-Naor and Walsh-Hadamard instead of FFT) can be simulated for \threexor{} exactly as for \threesum{}, and hence we obtain a trade-off of space usage $\tilde{O}(n^{2-2\eps/3})$ and query time $\tilde{O}(n^{1.5+\eps})$ after quadratic time preprocessing, for preprocessed \threexor{} with unknown $C$.

The approach can be used for other problems, provided the problems admit analogues of instance size reduction (like mod $p$) with bounded false positives and of the FFT for computing for every $c$ in the range, the number of solutions in the reduced instance that use $c$.

\section*{Acknowledgments}
This research was supported by NSF Grant CCF-2330048, BSF Grant 2024233, and a Simons Investigator Award.
Y. Kirkpatrick was supported in part by an NSF Graduate Research Fellowship under Grant No 2141064. J. Kuszmaul was supported in part by an MIT Akamai Presidential Fellowship. S. Mathialagan was supported in part by NSF CNS-2154149, a Simons Investigator Award, and Jane Street.

    \bibliographystyle{alphaurl} 
    \bibliography{refs}
\end{document}